\newcommand{\bea}{\begin{eqnarray}}
\newcommand{\eea}{\end{eqnarray}}
\newcommand{\ba}{\begin{eqnarray}}
\newcommand{\ea}{\end{eqnarray}}
\newcommand{\beq}{\begin{equation}}
\newcommand{\eeq}{\end{equation}}
\newcommand{\beqa}{\begin{eqnarray}}
\newcommand{\eeqa}{\end{eqnarray}}
\newcommand{\beqar}{\begin{eqnarray*}}
\newcommand{\eeqar}{\end{eqnarray*}}
\newtheorem{theorem}{Theorem}
\theoremstyle{definition}
\newtheorem{example}{Example}
\begin{document}

\title{Novel duality-invariant theories of electrodynamics}
\author{\'Angel J. Murcia}
\email{angelmurcia@icc.ub.edu}
\affiliation{Departament de F\'isica Qu\`antica i Astrof\'isica, Institut de Ci\`encies del Cosmos\\
 Universitat de Barcelona, Mart\'i i Franqu\`es 1, E-08028 Barcelona, Spain }

\begin{abstract}
We identify new families of duality-invariant theories of electrodynamics. We achieve this in two different ways. On the one hand, we present an algorithm to construct a one-parameter family of exactly duality-invariant theories from a single seed duality-invariant theory. If the seed theory is causal, the theories constructed from this method will also be causal when the parameter is non-negative. On the other hand, we find two additional novel families of duality-invariant theories which include a nonzero term independent of the electromagnetic field. The first of them generalizes Bialynicki-Birula electrodynamics, while the second family of theories features a well-defined Maxwell limit as the term independent of the gauge field strength is sent to zero. 
\end{abstract} 

\maketitle

%%%%%%%%%%%%%%%%%%%%%%%%%%%%%%

{\bf Introduction.} Symmetries arguably conform the most powerful tool towards the construction of physically meaningful theories. They provide the fundamental building blocks with which to write the subsequent Lagrangians. This is remarkably illustrative in electrodynamics: demanding Lorentz and gauge invariance, one finds that a theory for a vector field $A_a$ must be entirely expressed in terms of Lorentz scalars built from the gauge-invariant field strength $F_{ab}=2\partial_{[a} A_{b]}$. The simplest theory of electrodynamics corresponds to Maxwell theory, which has been extremely successful in the description of electromagnetic interactions. 

But one could consider other models of electrodynamics fulfilling Lorentz and gauge invariance beyond Maxwell theory. These are called theories of non-linear electrodynamics (NED) and have been extensively studied for a long time. Some prominent models include Born-Infeld theory \cite{Born:1934gh} --- introduced to provide a finite self-energy for the electric field of a charged point particle ---  or the Euler-Heisenberg Lagrangian \cite{Heisenberg:1936nmg}, incorporating one-loop quantum corrections --- for other NED models and additional background on the topic, we refer the reader to, e.g., \cite{Boillat:1970gw,Plebanski:1970zz,Bialynicki-Birula:1992rcm,Gibbons:1995cv,Gibbons:1995ap,Gaillard:1997zr,Gaillard:1997rt,Dunne:2012vv,Gaete:2014nda,kruglov2015model,Kruglov:2016ezw,Bandos:2020jsw,Bandos:2020hgy,Gullu:2020ant,Tahamtan:2020lvq,Avetisyan:2021heg,Sorokin:2021tge,Babaei-Aghbolagh:2022uij,Ferko:2022iru,Mkrtchyan:2022ulc,Lechner:2022qhb,Babaei-Aghbolagh:2022itg,Russo:2022qvz,Ferko:2023ruw,Russo:2024ptw,Russo:2024kto,Russo:2025fuc,Chen:2025ndc}. Nonetheless, if one really takes symmetries as the true guiding principle for the construction of physical theories, nothing prevents us from considering other theories of NED which do not violate other fundamental principles of physics (such as the absence of ghosts or causality).

In fact, Maxwell theory is not only Lorentz- and gauge-invariant: it is also invariant under electromagnetic duality rotations. As a result, the literature has extensively studied duality-invariant NEDs, the most renowned examples being Born-Infeld theory and the celebrated ModMax electrodynamics \cite{Bandos:2020jsw}. However, as the duality-invariance condition adopts the form of a quadratic partial differential equation, the finding of further explicit solutions\footnote{The general solution to this equation is known, although in an implicit way \cite{Gaillard:1997rt}.} has been a quite elusive task, with some notable exceptions \cite{Gibbons:1995cv,Hatsuda:1999ys}.

The present manuscript contributes in this direction in a two-fold way. On one side, we identify a novel solution-generating technique that allows one to construct a one-parameter family of duality-invariant NEDs from a seed duality-invariant one. This algorithm relies crucially on the properties of the ModMax Lagrangian and preserves the causality of the seed theory --- in case it features this property --- when the parameter associated with the construction is nonnegative. On the other side, we also find two novel instances of duality-invariant theories featuring an additional term $\psi$ independent of $F_{ab}$ with intriguing properties: the first conforms the most general fixed point of the solution-generating method described above, while the second family of theories has a well-defined Maxwell limit as $\psi \rightarrow 0$. 

\vspace{0.15cm}

{\bf Setup.} Consider a generic NED $\mathcal{L}=\mathcal{L}(F_{ab}, \star F_{ab})$ on four-dimensional flat space-time $\eta_{ab}$, where $(\star F)_{ab}=\frac{1}{2} \varepsilon_{abcd} F^{cd}$ and $\varepsilon_{abcd}$ denotes the components of the totally antisymmetric tensor (with $\varepsilon_{0123}=1$). As it is well known, the most general such a theory may be expressed as a function of the invariants:
\begin{equation}
\label{eq:defsp}
s=-\frac{1}{4} F_{ab} F^{ab}\,, \quad p=-\frac{1}{4} F_{ab} (\star F)^{ab}\,.
\end{equation}
In terms of these variables, the condition for $\mathcal{L}=\mathcal{L}(s,p)$ to be duality-invariant was found in \cite{Bialynicki-Birula:1992rcm,Gibbons:1995cv}. Since it will be important for us, let us briefly sketch its derivation. Define the two-form field strength $H_{ab}$ as follows:
\begin{equation}
\label{eq:defH}
H_{ab}=2 \star \frac{\partial \mathcal{L}}{\partial F^{ab}}\,.
\end{equation}
In terms of $H$, the (generalized) Maxwell equation and Bianchi identity read $\mathrm{d} H=0$ and $\mathrm{d} F=0$. Consider a electromagnetic duality rotation into new fields $(F',H')$:
\begin{equation}
\begin{pmatrix}
F' \\ H'
\end{pmatrix}=\begin{pmatrix}
\cos \alpha  & \sin \alpha \\ -\sin \alpha & \cos \alpha
\end{pmatrix}\begin{pmatrix}
F \\ H
\end{pmatrix}\,.
\label{eq:dualrot}
\end{equation}
Clearly, $\mathrm{d} H'=\mathrm{d} F'=0$. However, to enjoy duality-invariance, we must demand the invariance of \eqref{eq:defH}. In particular, if we rewrite \eqref{eq:defH} as $H=\mathfrak{f}(F)$, we must require that $H'=\mathfrak{f}(F')$. Infinitesimally, this translates into:
\begin{equation}
\frac{1}{2}\star F_{ab}= \frac{\partial}{\partial F^{ab}} \left ( \frac{\partial \mathcal{L}}{\partial F_{cd}} \star \frac{\partial \mathcal{L}}{\partial F^{cd}} \right) \,.
\end{equation}
This equation may be readily integrated to yield:
\begin{equation}
\frac{1}{4} F_{ab} \star F^{ab}=  \frac{\partial \mathcal{L}}{\partial F_{ab}} \star \frac{\partial \mathcal{L}}{\partial F^{ab}}-\psi\,,
\end{equation}
where $\psi$ is independent of $F_{ab}$. This term is commonly set to zero to ensure a Maxwell weak-field limit \cite{Gibbons:1995cv}. However, we will not set it systematically to zero, as it will be justified afterwards. This term $\psi$ may be equivalently obtained from the Hamiltonian formalism --- we refer the interested reader to the Supplemental Material. 

Using the invariants $(s,p)$ defined in \eqref{eq:defsp}, one may rewrite the condition for duality invariance as:
\begin{equation}
p \left (\frac{\partial \mathcal{L}}{\partial s} \right)^2-2 s \frac{\partial \mathcal{L}}{\partial s}\frac{\partial \mathcal{L}}{\partial p}- p \left (\frac{\partial \mathcal{L}}{\partial p} \right)^2+\psi=p\,.
\label{eq:spdual}
\end{equation}
There are several well-known instances of solutions to \eqref{eq:spdual} in the literature in the case $\psi=0$. The simplest example is provided by Maxwell theory $\mathcal{L}=s$. Other prominent instances are given by Born-Infeld electrodynamics \cite{Born:1934gh}:
\begin{equation}
\mathcal{L}^{\mathrm{BI}}=T- \sqrt{T^2-2 T s-p^2}\,, \quad T \in \mathbb{R}\,,
\label{eq:bilag}
\end{equation}
or the celebrated ModMax electrodynamics \cite{Bandos:2020jsw}:
\begin{equation}
\mathcal{L}_\gamma^{\mathrm{MM}}=\cosh \gamma \, s+ \sinh \gamma \, \sqrt{s^2+p^2}\,, \quad \gamma \in \mathbb{R}\,,
\label{eq:modmax}
\end{equation}
which represents the most general theory of non-linear electrodynamics which is also conformally invariant.

It turns out that there is a method to generate an infinite family of duality-invariant theories of non-linear electrodynamics from a given solution of \eqref{eq:spdual}. Indeed, let $\mathcal{L}_0(s,p)$ an arbitrary such solution. Define:
\begin{equation}
\mathcal{M}_\gamma(\mathcal{L}_0)(s,p)=\mathcal{L}_0(\mathcal{L}_\gamma^{\mathrm{MM}}(s,p),p)\,, \quad \gamma \in \mathbb{R}\,.
\label{eq:modmaxmap}
\end{equation}
Observe that $\mathcal{M}_\gamma(\mathcal{L}_0)$ is obtained by replacing every appearance of $s$ in $\mathcal{L}_0$ by $\mathcal{L}_\gamma^{\mathrm{MM}}$. The map $\mathcal{M}_\gamma$ picks a given theory and generates a one-parameter family of NEDs. We will refer to $\mathcal{M}_\gamma$ as the \emph{ModMax map}. As the following theorem shows, if the ModMax map is applied on a duality-invariant theory, the resulting family of theories for any $\gamma$ is also duality-invariant.

\begin{theorem}
\label{thm}
Let $\mathcal{L}_0(s,p)$ be a given duality-invariant theory. Then the one-parameter family of theories $\mathcal{M}_\gamma(\mathcal{L}_0)(s,p)=\mathcal{L}_0(\mathcal{L}_\gamma^{\mathrm{MM}}(s,p),p)$ 
 is duality-invariant.
\end{theorem}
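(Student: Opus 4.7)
The plan is to verify by direct computation that $\tilde{\mathcal{L}}(s,p)\equiv \mathcal{M}_\gamma(\mathcal{L}_0)(s,p) = \mathcal{L}_0(\tilde s, p)$, with $\tilde s\equiv \mathcal{L}_\gamma^{\mathrm{MM}}(s,p)$, satisfies the duality-invariance condition \eqref{eq:spdual}. I would begin by introducing the shorthand $A\equiv \partial_s\mathcal{L}_0\vert_{(\tilde s,p)}$, $B\equiv \partial_p\mathcal{L}_0\vert_{(\tilde s,p)}$, $a\equiv \partial_s\mathcal{L}_\gamma^{\mathrm{MM}}$ and $b\equiv \partial_p\mathcal{L}_\gamma^{\mathrm{MM}}$, so that the chain rule gives $\partial_s\tilde{\mathcal L}=Aa$ and $\partial_p\tilde{\mathcal L}=Ab+B$. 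Substituting into the left-hand side of \eqref{eq:spdual} and expanding yields
\begin{equation*}
A^2\bigl(pa^2-2sab-pb^2\bigr)-2AB(sa+pb)-pB^2.
\end{equation*}

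Three ingredients then force this expression to reproduce the duality relation for $\tilde{\mathcal L}$. First, the parenthesized combination equals $p$, because $\mathcal{L}_\gamma^{\mathrm{MM}}$ itself satisfies \eqref{eq:spdual} with $\psi=0$. Second, since $\mathcal{L}_\gamma^{\mathrm{MM}}(s,p)=s\cosh\gamma+\sinh\gamma\sqrt{s^2+p^2}$ is homogeneous of degree one in $(s,p)$, Euler's identity gives $sa+pb=\tilde s$. The expression therefore collapses to $pA^2-2\tilde s AB-pB^2$, which is exactly the left-hand side of \eqref{eq:spdual} for $\mathcal{L}_0$ evaluated at the point $(\tilde s,p)$. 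Third, the duality invariance of $\mathcal{L}_0$ at that point identifies this with $p-\psi_0$, where $\psi_0$ is the (field-independent) term associated with $\mathcal{L}_0$. Hence $\tilde{\mathcal L}$ solves \eqref{eq:spdual} with $\psi_{\mathrm{new}}=\psi_0$, which completes the argument.

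The only conceptually delicate step is recognising that the degree-one homogeneity of the ModMax Lagrangian is exactly the structural property needed for $\mathcal{M}_\gamma$ to preserve duality invariance: without it, the chain-rule expansion would leave a residual term $2AB(\tilde s-sa-pb)$ that generically depends on $F_{ab}$ and would spoil the condition. Everything else is purely algebraic, so no serious obstacle is anticipated. As a sanity check one could verify the statement on the elementary example $\mathcal{L}_0=s$ (Maxwell), for which the map reproduces $\mathcal{M}_\gamma(s)=\mathcal{L}_\gamma^{\mathrm{MM}}$, and on $\mathcal{L}_0=\mathcal{L}^{\mathrm{BI}}$, where $\mathcal{M}_\gamma$ should land on the generalised Born--Infeld family.
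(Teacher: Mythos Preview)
Your proposal is correct and follows essentially the same route as the paper's own proof: chain rule, then invoke (i) the duality invariance of $\mathcal{L}_\gamma^{\mathrm{MM}}$ to reduce $pa^2-2sab-pb^2$ to $p$, (ii) its degree-one homogeneity to set $sa+pb=\tilde s$, and (iii) the duality invariance of $\mathcal{L}_0$ at $(\tilde s,p)$ to finish, inheriting the same $\psi$. The paper phrases step (ii) as conformal invariance of ModMax rather than Euler's identity, but the content is identical.
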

\begin{proof}
If $z=\mathcal{L}_\gamma^{\mathrm{MM}}$ and $\mathcal{M}_\gamma(\mathcal{L}_0)=\hat{\mathcal{L}}_\gamma$:
\begin{equation}
\frac{\partial \hat{\mathcal{L}}_\gamma}{\partial s}=\frac{\partial \mathcal{L}_0}{\partial z} \frac{\partial z}{\partial s}\,,\quad 
\frac{\partial \hat{\mathcal{L}}_\gamma}{\partial p}=\frac{\partial \mathcal{L}_0}{\partial z} \frac{\partial z}{\partial p}+\frac{\partial \mathcal{L}_0}{\partial p}\,,
\end{equation}
where $\mathcal{L}_0$ is viewed as a function of $(z,p)$. Direct computation shows that:
\begin{align}
\nonumber
&p \left (\frac{\partial \hat{\mathcal{L}}_\gamma}{\partial s} \right)^2-2 s \frac{\partial \hat{\mathcal{L}}_\gamma}{\partial s}\frac{\partial \hat{\mathcal{L}}_\gamma}{\partial p}- p \left (\frac{\partial \hat{\mathcal{L}}_\gamma}{\partial p} \right)^2=\\ \nonumber
& \left ( \frac{\partial \mathcal{L}_0}{\partial z} \right)^2 \left (p \left ( \frac{\partial z}{\partial s} \right)^2-2 s \frac{\partial z}{\partial s}\frac{\partial z}{\partial p}- p \left (\frac{\partial z}{\partial p} \right)^2 \right)\\&-2 \frac{\partial \mathcal{L}_0}{\partial z}
 \frac{\partial \mathcal{L}_0}{\partial p} \left (s \frac{\partial z}{\partial s}+p\frac{\partial z}{\partial p}\right)-p \left (\frac{\partial \mathcal{L}_0}{\partial p} \right)^2\,.
\end{align}
Using that $z=\mathcal{L}_\gamma^{\mathrm{MM}}$ is conformally and duality-invariant, so that it satisfies \eqref{eq:spdual} with $\psi=0$ and 
\begin{equation}
s \frac{\partial z}{\partial s}+p \frac{\partial z}{\partial p}= z\,,
\end{equation}
one directly learns that $\hat{\mathcal{L}}_\gamma$ satisfies \eqref{eq:spdual} with the same $\psi$ as $\mathcal{L}_0$ and we conclude. 
\end{proof}
Theorem \ref{thm} allows to construct families of exactly duality-invariant theories from a given duality-invariant theory, generalizing some known perturbative results in the literature in the context of $T\bar{T}$-flows \cite{Babaei-Aghbolagh:2024uqp}.

The iterative application of the ModMax map does not produce further families of duality-invariant theories. This can be seen from the following property of the ModMax Lagrangian \eqref{eq:modmax}:
\begin{equation}
\mathcal{M}_\gamma(\mathcal{L}_{\kappa}^{\mathrm{MM}} )=\mathcal{L}_{\gamma+\kappa}^{\mathrm{MM}}\,.
\end{equation}
Let us now present some non-trivial families of theories obtained from the application of the ModMax map on some well-known theories of electrodynamics.%\footnote{Beyond the trivial case of Maxwell theory, in which case the ModMax map directly produces the ModMax theory.}.

\begin{example}
Under \eqref{eq:modmaxmap}, the Born-Infeld Lagrangian produces:
\begin{equation}
\mathcal{M}_\gamma(\mathcal{L}^{\mathrm{BI}})=T- \sqrt{T^2-2 T \mathcal{L}_\gamma^{\mathrm{MM}}-p^2}\,.
\label{eq:genbi}
\end{equation}
This theory corresponds to generalized Born-Infeld theory \cite{Bandos:2020jsw,Bandos:2020hgy}. From this new perspective, duality invariance of \eqref{eq:genbi} is a direct consequence of Theorem \ref{thm}.
\end{example}

\begin{example}
\label{ep:gr1}
Take the duality-invariant theories:
\begin{align}
& \hspace{1cm}\mathcal{L}^{\mathrm{GR}}=\sqrt{\alpha\mp  2u} \sqrt{\kappa \pm 2v}\,,\\
\label{eq:uvdef}
&2u=\sqrt{s^2+p^2}-s\,, \quad 2v=\sqrt{s^2+p^2}+s\,.
\end{align} 
where $\{\alpha, \kappa\}$  are constants. Applying the ModMax map:
\begin{equation}
\mathcal{M}_\gamma(\mathcal{L}^{\mathrm{GR}})= \sqrt{\alpha\mp 2e^{-\gamma} u } \sqrt{\kappa \pm 2e^{\gamma} v}\,.
\end{equation}
This family coincides\footnote{\emph{En passant} we fix a small typo present in Equation (4.1) of \cite{Gibbons:1995cv}. According to their notation, they should have $\beta \delta=-1$.} with the class of duality-invariant theories found by Gibbons and Rasheed in \cite{Gibbons:1995cv}. 
\end{example}

\begin{example}
Take the following duality-invariant theory considered by Hatsuda, Kamimura and Sekiya \cite{Hatsuda:1999ys}:
\begin{equation}
\mathcal{L}^{\mathrm{HKS}}=\frac{(\theta+bs)\sqrt{1-(\theta-b s)^2}-\arcsin\left[\theta-bs \right]}{2b}\,,
\end{equation}
with $\theta=\sqrt{b^2 s^2+2b\sqrt{s^2+p^2}}$ and $b \in \mathbb{R}$. Applying \eqref{eq:modmaxmap}:
\begin{align}
\nonumber
\mathcal{M}_\gamma(\mathcal{L}^{\mathrm{HKS}})&=\frac{(\theta_\gamma+b\mathcal{L}_\gamma^{\mathrm{MM}})\sqrt{1-(\theta_\gamma-b \mathcal{L}_\gamma^{\mathrm{MM}})^2}}{2b}\\&-\frac{\arcsin\left[\theta_\gamma-b\mathcal{L}_\gamma^{\mathrm{MM}} \right]}{2b}\,,
\end{align}
where $\theta_\gamma=\sqrt{b^2 \left (\mathcal{L}_\gamma^{\mathrm{MM}} \right)^2+2b\sqrt{\left (\mathcal{L}_\gamma^{\mathrm{MM}} \right)^2+p^2}}$. This new family of theories is manifestly different from the original one $\mathcal{L}^{\mathrm{HKS}}$, since their weak-field expansion reads\footnote{Note that $\mathcal{O}\left ( s^2,p^2 \right)$ stands for terms that, under a rescaling $(s,p) \rightarrow (as, ap)$, get an overall factor $a^{k}$ with $k \geq 2$.}:
\begin{align}
\nonumber
\mathcal{M}_\gamma(\mathcal{L}^{\mathrm{HKS}})=\mathcal{L}_\gamma^{\mathrm{MM}}&- \frac{2 \sqrt{2 b}}{3} \left(\left (\mathcal{L}_\gamma^{\mathrm{MM}} \right)^2+ p^2\right)^{3/4} \\& +\mathcal{O}\left (s^{2},p^{2} \right )\,.
\end{align}
\end{example}

\begin{example}
\label{ep:gr2}
Take the family of duality-invariant theories obtained in \cite{Gibbons:1995cv} and given by\footnote{Observe that the theory is not well defined for $p=0$.}:
\begin{align}
\label{eq:gr2}
&\hspace{-0.2cm}\mathcal{L}_{\pm}^{\mathrm{GR},\psi}=\mathfrak{f}_K (\sqrt{u}+\sqrt{v},0)\pm\mathfrak{f}_K (\sqrt{u}-\sqrt{v},\mathrm{sign}(p)\, \psi ), \hspace{-0.1cm} \\ \nonumber
&\hspace{-0.2cm}\mathfrak{f}_K(b,c)=\frac{b}{2} \sqrt{2(K^2-c)-b^2}\\ &\hspace{1.5cm}+(K^2-c) \arcsin\left ( \frac{b}{\sqrt{2(K^2-c)}}\right)\,,
\end{align}
where $(b,c)$ are some ancillary variables to define $\mathfrak{f}_K$,  $K \in \mathbb{R}$ and $(u,v)$ were defined in \eqref{eq:uvdef}. The theory \eqref{eq:gr2} solves \eqref{eq:spdual} for arbitrary $\psi$ and for both sign choices $\pm$. Applying \eqref{eq:modmaxmap}, one gets a novel family of duality-invariant theories given by:
\begin{align}
\label{eq:gr2mm}
\mathcal{M}_\gamma \left (\mathcal{L}_{\pm}^{\mathrm{GR},\psi} \right)&=\mathfrak{f}_K(e^{-\gamma/2}\sqrt{u}+e^{\gamma/2}\sqrt{v},0)\\&\nonumber \pm\mathfrak{f}_K(e^{-\gamma/2}\sqrt{u}-e^{\gamma/2}\sqrt{v},{\rm sign}(p)\,\psi)\,.
\end{align}
\end{example}

{\bf Causality and ModMax map}. Consider a duality-invariant theory $\mathcal{L}$ satisfying \eqref{eq:spdual} with $\psi=0$. In terms of the variables $(u,v)$ introduced in \eqref{eq:uvdef}, Eq.\ \eqref{eq:spdual} reads: %Perform the change of variables:
\begin{equation}
\frac{\partial \mathcal{L}}{\partial u} \frac{\partial \mathcal{L}}{\partial v}=-1\,. 
\label{eq:spdualuv}
\end{equation}
As explained in \cite{courant1962methods}, this equation may be implicitly solved as follows:
\begin{equation}
\mathcal{L}[\ell(\tau)]=\ell(\tau)-\frac{2 u}{\dot{\ell}(\tau)}\,, \quad \tau=v+\frac{u}{\dot{\ell}(\tau)^2}\,.
\label{eq:paramch}
\end{equation}
where $\ell(\tau)$ is an arbitrary function of $\tau \geq 0$ and $\dot{\ell}=\mathrm{d}\ell/d\tau$. Indeed, if the Lagrangian may be expressed as \eqref{eq:paramch}, then
\begin{equation}
\mathrm{d} \mathcal{L}=- \mathrm{d}u/\dot{\ell}+ \dot{\ell} \mathrm{d}v\,,
\end{equation}
so that \eqref{eq:spdualuv} is trivially satisfied. Let us explore how the ModMax map acts when the original theory is expressed implicitly as in \eqref{eq:paramch}. In terms of the $(u,v)$ variables, one has that
\begin{equation}
\mathcal{M}_\gamma (\mathcal{L})(u,v)=\mathcal{L}(e^{-\gamma}u, e^{\gamma}v)\,.
\end{equation}
Since $\mathcal{M}_\gamma (\mathcal{L})$ is duality-invariant by Theorem \ref{thm}, the parametrization \eqref{eq:paramch} applies and:
\begin{equation}
\mathcal{M}_\gamma (\mathcal{L})[q(\sigma)]=q(\sigma)-\frac{2 u}{\dot{q}(\sigma)}\,, \quad \sigma=v+\frac{u}{\dot{q}(\sigma)^2}\,,
\label{eq:parmm}
\end{equation}
for a certain function $q$. Now, note that:
\begin{align}
\mathrm{d}\left ( \mathcal{M}_\gamma (\mathcal{L})\right) &=-e^{-\gamma}/\dot{\ell}\, \mathrm{d}u+e^{\gamma} \dot{\ell} \mathrm{d}v=-\mathrm{d}u/\dot{q}+\dot{q} \, \mathrm{d}v\,,
\end{align}
where $\dot{\ell}$ is evaluated at $\tau(e^{-\gamma}u, e^{\gamma}v)$ and $\dot{q}$ at $\sigma(u,v)$. As a consequence, we conclude that
\begin{equation}
\dot{q}(\sigma(u,v))=e^{\gamma}\dot{\ell}(\tau(e^{-\gamma}u, e^{\gamma}v))\,.
\label{eq:edoparmm}
\end{equation}
Using this result in \eqref{eq:paramch}, we infer that $\tau(e^{-\gamma}u, e^{\gamma}v))=e^{\gamma} \sigma(u,v)$. Substituting in \eqref{eq:edoparmm}, we learn that\footnote{Strictly, the conclusion would be $q(\sigma)=\ell(e^{\gamma} \sigma)+ b$, for $b \in \mathbb{R}$. However, this just amounts to the addition of an inoffensive constant to the Lagrangian, so we simply disregard it.}
\begin{equation}
q(\sigma)=\ell(e^{\gamma} \sigma)\,.
\label{eq:modmaxpar}
\end{equation}
Therefore, in terms of the implicit representation \eqref{eq:paramch}, the ModMax map amounts to replacing $\ell(\tau)$ by $\ell(e^{\gamma} \tau)$, which is entirely consistent with existing examples in the literature, such as in generalized Born-Infeld \cite{Russo:2024llm}.

This result has strong implications for the subsequent family of duality-invariant theories obtained by the application of the ModMax map on a seed theory $\mathcal{L}$. Indeed, parametrizing $\mathcal{L}$ as in \eqref{eq:paramch}, it was shown in \cite{Russo:2024ptw} that $\mathcal{L}$ is causal if and only if:
\begin{equation}
\dot{\ell} \geq 1\,, \quad \ddot{\ell} \geq 0\,.
\label{eq:cauconds}
\end{equation}
Equation \eqref{eq:modmaxpar} implies the following result.
\begin{theorem}
\label{thm2}
Let $\mathcal{L}$ be a given duality-invariant causal theory. Take $\gamma \geq 0$. Then $\mathcal{M}_\gamma(\mathcal{L})$ is causal too.
\end{theorem}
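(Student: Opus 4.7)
The plan is to use the explicit description \eqref{eq:modmaxpar} of how the ModMax map acts at the level of the implicit parametrization \eqref{eq:paramch}, and then directly check that the causality conditions \eqref{eq:cauconds} are preserved under the replacement $\ell(\tau) \mapsto q(\sigma) = \ell(e^{\gamma}\sigma)$.

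First, I would fix a causal duality-invariant seed theory $\mathcal{L}$ and, using the result from \cite{Russo:2024ptw} together with \eqref{eq:paramch}, represent it by a function $\ell(\tau)$ satisfying $\dot{\ell}(\tau)\geq 1$ and $\ddot{\ell}(\tau)\geq 0$ for all $\tau\geq 0$. By Theorem \ref{thm}, $\mathcal{M}_\gamma(\mathcal{L})$ is duality-invariant, hence also admits a parametrization of the form \eqref{eq:parmm} for some function $q(\sigma)$. Equation \eqref{eq:modmaxpar} identifies this function explicitly as $q(\sigma) = \ell(e^{\gamma}\sigma)$.

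Next, I would compute the derivatives of $q$ by the chain rule:
\begin{equation}
\dot{q}(\sigma) = e^{\gamma}\,\dot{\ell}(e^{\gamma}\sigma), \quad \ddot{q}(\sigma) = e^{2\gamma}\,\ddot{\ell}(e^{\gamma}\sigma).
\end{equation}
Since $\gamma\geq 0$, the map $\sigma \mapsto e^{\gamma}\sigma$ sends $[0,\infty)$ into $[0,\infty)$, so the hypotheses on $\ell$ can be evaluated at $e^{\gamma}\sigma$. Then $\dot{q}(\sigma) = e^{\gamma}\dot{\ell}(e^{\gamma}\sigma) \geq e^{\gamma}\cdot 1 \geq 1$ using $\gamma\geq 0$, and $\ddot{q}(\sigma) = e^{2\gamma}\ddot{\ell}(e^{\gamma}\sigma)\geq 0$ trivially. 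Applying the characterization \eqref{eq:cauconds} in the reverse direction yields that $\mathcal{M}_\gamma(\mathcal{L})$ is causal.

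There is essentially no obstacle: the substantive work has already been done in deriving \eqref{eq:modmaxpar}, and the causality criterion \eqref{eq:cauconds} is homogeneous enough that scaling the argument of $\ell$ by $e^{\gamma}\geq 1$ only helps. The only subtlety worth a remark is that the condition $\gamma\geq 0$ is sharp for this argument: for $\gamma<0$ one has $e^{\gamma}<1$, and although $\ddot{q}\geq 0$ would still hold, the bound $\dot{q}\geq 1$ would generically fail when $\dot{\ell}$ saturates to $1$ in the weak-field regime, which is precisely why the statement of Theorem \ref{thm2} restricts to $\gamma\geq 0$.
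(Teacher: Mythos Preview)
Your proposal is correct and follows essentially the same route as the paper: use the implicit parametrization and \eqref{eq:modmaxpar} to write $q(\sigma)=\ell(e^{\gamma}\sigma)$, compute $\dot q=e^{\gamma}\dot\ell$ and $\ddot q=e^{2\gamma}\ddot\ell$ by the chain rule, and conclude from \eqref{eq:cauconds} that $\gamma\geq 0$ preserves causality. Your closing remark on the sharpness of $\gamma\geq 0$ is an extra observation, but it is consistent with the paper's subsequent refinement.
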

\begin{proof}
As proven above, $\mathcal{M}_\gamma(\mathcal{L})$ can be expressed implicitly as in \eqref{eq:parmm} with $q(\sigma)=\ell(e^{\gamma} \sigma)$, cf. Eq.\ \eqref{eq:modmaxpar}. Hence
\begin{equation}
\dot{q}(\sigma)=e^{\gamma} \dot{\ell}(e^{\gamma} \sigma)\,, \quad \ddot{q}(\sigma)=e^{2\gamma} \ddot{\ell}(e^{\gamma} \sigma)\,.
\end{equation}
Therefore, if $\gamma \geq 0$ and the seed theory satisfies the causality constraints \eqref{eq:cauconds}, the new family of duality-invariant theories $\mathcal{M}_\gamma(\mathcal{L})$ will satisfy $\dot{q} \geq 1$ and $\ddot{q} \geq 0$ and we conclude. 
\end{proof}
Note that we can refine a bit more Theorem \ref{thm2}. Indeed, assume a theory $\mathcal{L}$ that is expressed as in \eqref{eq:paramch} and consider that $\dot{\ell} \geq \dot{\ell}_0$ for a real number $\dot{\ell}_0>0$. Then, $\mathcal{M}_\gamma(\mathcal{L})$ with $\gamma \geq -\log(\dot{\ell}_0)$ will be causal. If $\dot{\ell}_0 >1$, negative values of $\gamma$ are allowed. If $\dot{\ell}_0 <1$ --- what implies that the starting theory was not causal ---, the ModMax map allows one to construct causal generalizations of the seed theory for $\gamma \geq -\log(\dot{\ell}_0)$. 

\vspace{0.15cm}

{\bf Novel duality-invariant theories with $\psi \neq 0$}. Instances of duality-invariant theories with non-trivial $\psi$ were already known --- just consider Example \ref{ep:gr2}, extracted from \cite{Gibbons:1995cv}. It is the purpose of this section to present novel duality-invariant theories with nonzero $\psi$. The justification for this is two-fold: there are popular NEDs with no Maxwell limit for weak fields (e.g., ModMax) and some NEDs with $\psi \neq 0$ have a Maxwell limit as $\psi \rightarrow 0$ --- see theory \eqref{eq:ame} below. 

Let us briefly comment about the possible interpretation of $\psi$. As commented above, the term $\psi$ in the duality-invariance condition \eqref{eq:spdual} is only required to be independent of $F_{ab}$. Therefore, the most simple choice for $\psi$ is to take it as a pure integration constant --- just a real number ---, with the same units as $s$ and $p$.

However, there are other possibilities \emph{a priori}. In fact, $\psi$ could be built from other fields with non-trivial dynamics, as long as these are independent of $F_{ab}$. Nevertheless, promoting $\psi$ to be dynamical will generically bring about an important hindrance: the equations of motion for the fields composing $\psi$ will \emph{not} generically be invariant under duality rotations --- so it is not clear if one could even speak of duality invariance in these cases. To see this, take a theory $\mathcal{L}=\mathcal{L}(s,p,\psi)$ satisfying \eqref{eq:spdual}. Under rotations \eqref{eq:dualrot} of small angle $\alpha$ between $F_{ab}$ and $H_{ab}$:
\begin{equation}
\delta_{\alpha} \left( \frac{\partial \mathcal{L}}{\partial \psi} \right)=\alpha\,,
\label{eq:trafoderpsi}
\end{equation}
so that the equations for the additional fields conforming $\psi$ will not transform in general covariantly. A possible resolution of this problem would be to impose that $\psi$ is given by a total derivative term. In such a case, the relevant dangerous terms $\dfrac{\partial \mathcal{L}}{\partial \psi}$ will always be affected by derivatives in the equations for the fields composing $\psi$. As a result, \eqref{eq:trafoderpsi} will guarantee the invariance of the equations. Various possible choices for $\psi$ that correspond to total derivatives
and provide equations for the fields composing $\psi$ that are invariant under duality transformations are presented in the Supplemental Material.

%A particular example of this situation takes place if\footnote{Interestingly enough, the subsequent duality-invariant theory would also be conformal in this case.} $\psi=\mathfrak{p}$, where $-4\mathfrak{p}=\mathcal{F}_{ab} \left (\star \mathcal{F} \right) ^{ab}$ is the invariant associated with the field strength $\mathcal{F}_{ab}=2\partial_{[a} \mathcal{A}_{b]}$ of a different gauge field $\mathcal{A}_a$. {\color{blue} This is further illustrated in the Supplemental Material, together with additional possible choices for $\psi$.}

Let us now study solutions of \eqref{eq:spdual} with non-trivial $\psi$. Its direct resolution conforms a daunting task, so it is convenient to look for special solutions with additional properties. Particularly, we will restrict to theories $\mathcal{L}=\mathcal{L}(s,p,\psi)$ which are homogeneous functions of degree one, i.e., $\mathcal{L}(\Theta s,\Theta p, \Theta \psi)=\Theta \mathcal{L}$ for any function $\Theta$, so that
\begin{equation}
s \frac{\partial \mathcal{L}}{\partial s}+p \frac{\partial \mathcal{L}}{\partial p}+ \psi \frac{\partial \mathcal{L}}{\partial \psi}=\mathcal{L}\,.
\label{eq:hdifcond}
\end{equation}

We will be interested in finding NEDs that satisfy the duality-invariance condition \eqref{eq:spdual} \emph{and} \eqref{eq:hdifcond}. Clearly, this way we will not be able to find the most general solution of \eqref{eq:spdual}, but instead we will find valuable instances of novel duality-invariant theories with a non-trivial $\psi$. Amusingly, note that \eqref{eq:hdifcond} would represent the condition for conformal invariance of the theory if $\psi$ transforms under a conformal transformation $\eta_{ab} \rightarrow \Omega^2 \eta_{ab}$ as $\psi \rightarrow \Omega^{-4} \psi$. 

In first place, one could resort to a perturbative approach and identify instances of duality-invariant NEDs up to a certain perturbative order in $\psi$ which additionally fulfill \eqref{eq:hdifcond}. We show in the Supplemental Material that this approach is indeed useful, describing the algorithm to produce duality-invariant theories up to any finite order in $\psi$, as well as the explicit form of the most general solution to \eqref{eq:spdual} and \eqref{eq:hdifcond} up to sixth order in $\psi$.

But it is possible to go beyond a perturbative analysis and find exact solutions of \eqref{eq:spdual} with non-trivial $\psi$ and satisfying \eqref{eq:hdifcond}. A first family of solutions, as well as its one-parameter generalization provided by the ModMax map, was provided in Example \ref{ep:gr2}. Next we will present two additional novel families of theories fulfilling \eqref{eq:spdual} and \eqref{eq:hdifcond}:  the first example will conform the most general duality-invariant theory that remains invariant under the action of the ModMax map \eqref{eq:modmaxmap}, while the second family of theories represents, to the best of our knowledge, the first theories with non-trivial $\psi$ that have a well-defined Maxwell limit as $\psi \rightarrow 0$.

\begin{example}
Let us look for duality-invariant theories which do not depend on one of the invariants $s$ or $p$. The duality invariance condition \eqref{eq:spdual} with non-trivial $\psi$ is inconsistent with a theory independent of $p$, so we may only hope to find theories which are independent of $s$. In such a case, Eq. \ \eqref{eq:spdual} dramatically simplifies into
\begin{equation}
-p \left ( \frac{\partial \mathcal{L}}{\partial p} \right)^2+\psi=p\,,
\end{equation}
whose most general solution is:
\begin{equation}
\mathcal{L}_{\rm g BB}=\sqrt{(\psi-p)p}+\psi \arctan \left [ \sqrt{\frac{p}{\psi-p}}\right]+\mathfrak{h}(\psi) \,,
\label{eq:genbb}
\end{equation}
where $\mathfrak{h}$ is an arbitrary function of $\psi$ and where we require $p(\psi-p) \geq 0$ for reality of $\mathcal{L}_{\rm g BB}$. Observe that \eqref{eq:hdifcond} is satisfied automatically if  $\mathfrak{h}$ is a linear function.  Up to a trivial overall sign, the theory \eqref{eq:genbb} represents the \emph{most general} duality-invariant NED which is independent of $s$. As a result, it conforms the most general fixed point of the ModMax map (preserving duality invariance). It is to be interpreted as a generalization of Bialynicki-Birula electrodynamics \cite{Bialynicki-Birula:1992rcm}, whose Lagrangian $\mathcal{L}=i \vert p \vert$ is obtained as $\psi \rightarrow 0$ in \eqref{eq:genbb} (it corresponds to a topological term). Clearly, \eqref{eq:genbb} does not have a Maxwell limit. 

Define $y=\dfrac{p}{\psi}$. In the weak-field limit $1 >>y>0$ and disregarding the term $\mathfrak{h}(\psi)$, $\mathcal{L}_{\rm g BB}$ reduces to
%(note that the reality condition implies that $y \geq 0$) and disregarding the term $\mathfrak{h}(\psi)$, $\mathcal{L}_{\rm g BB}$ reduces to
\begin{align}
&\mathcal{L}_{\rm g BB}=2\psi \sqrt{y} \left[ 1- \frac{y}{6} -\frac{y^2}{40 } +\mathcal{O}\left ( y^3 \right)\right]\,, \quad \psi > 0\,,\\
&\mathcal{L}_{\rm g BB}=\frac{2\psi y^{3/2}}{3} \left[ 1 +\frac{3y}{10 } +\frac{9y^2}{56}+\mathcal{O}\left ( y^3 \right)\right]\,, \quad \psi < 0\,.
\end{align}

\end{example}
\begin{example}
Consider now the following theory:
\begin{align}
\label{eq:ame}
&\hspace{-0.2cm}\mathcal{L}_{\psi{\rm M}}(s,p,\psi)=\frac{s}{\sqrt{2}}\mathfrak{t}_++\frac{p}{\sqrt{2}} \mathfrak{t}_--\frac{\psi}{2}\mathfrak{t}_\psi\,,\\
&\hspace{-0.2cm}\mathfrak{t}_+=\sqrt{ 1- \frac{\psi\, p}{s^2+p^2}+\sqrt{\frac{s^2+(\psi-p)^2}{s^2+p^2}}}\,, \\
&\hspace{-0.2cm}\mathfrak{t}_-=\frac{1}{\psi s}\sqrt{\psi^2 s^2\left[ \frac{ \psi p}{s^2+p^2}-1+\sqrt{\frac{s^2+(\psi-p)^2}{s^2+p^2}} \right]}\,,\\
&\hspace{-0.2cm}\mathfrak{t}_\psi=\arctan\left(\frac{s}{p}\right)+2\arctan\left (\frac{ \mathfrak{t}_-}{\sqrt{2}+\mathfrak{t_+}} \right)\,.
\end{align}
The theory $\mathcal{L}_{\psi{\rm M}}$ may be seen to satisfy \eqref{eq:spdual} for nonzero $\psi$ after noting the following identities:
\begin{align}
\frac{\partial \mathcal{L}_{\psi{\rm M}}}{\partial s}&=\frac{\mathfrak{t}_+}{\sqrt{2}}\,, \quad \frac{\partial\mathcal{L}_{\psi{\rm M}}}{\partial p}=\frac{\mathfrak{t}_-}{\sqrt{2}}\,, \quad \frac{\partial \mathcal{L}_{\psi{\rm M}}}{\partial \psi}=-\frac{\mathfrak{t}_\psi}{2}\,, \\
\mathfrak{t}_+^2&=\mathfrak{t}_-^2+2-\frac{2 \psi \,p}{s^2+p^2}\,, \quad \mathfrak{t}_+ \mathfrak{t}_-=\frac{\psi \,s}{s^2+p^2}\,.
\end{align}
Consequently, Eq.\ \eqref{eq:ame} represents an instance of a novel duality-invariant theory of electrodynamics.  As $\psi \rightarrow 0$, the theory \eqref{eq:ame} reduces naturally to Maxwell theory:
\begin{equation}
\lim_{\psi \rightarrow 0} \mathcal{L}_{\psi{\rm M}}=s\,. 
\end{equation}
In addition, $\mathcal{L}_{\psi{\rm M}}$ is always real for all real values of $s,p$ and $\psi$.  This can be seen from the fact that:
\begin{align}
\frac{s^2+(\psi-p)^2}{s^2+p^2}=\frac{\psi^2 s^2}{(s^2+p^2)^2}+\left (1-\frac{\psi p}{s^2+p^2} \right)^2\,.
\end{align}
In another vein, observe that $\mathcal{L}_{\psi{\rm M}}$ is not analytic for null fields $s=p=0$. Nevertheless, this is already a well-known feature of some celebrated duality-invariant theories, including ModMax electrodynamics \cite{Sorokin:2021tge}.

In the weak-field limit $\psi^2 > >s^2+p^2>0$, we get the following approximation\footnote{Again, by $\mathcal{O}\left (x^{3/2},y^{3/2} \right)$ we denote terms that, under a rescaling $(x,y) \rightarrow (ax, ay)$, get an overall factor $a^k$ with $k \geq 3/2$.} for $\mathcal{L}_{\psi{\rm M}}$:
\begin{align}
\nonumber
\mathcal{L}_{\psi{\rm M}}&=-\psi\left [ \frac{\arctan \left(\frac{x}{y} \right)}{2}+\,\mathrm{sign}(x) \arctan \frac{\omega^+ }{\omega^-} \right]\\& \hspace{-1cm}+\frac{\psi \left [ (y +\sqrt{x^2+y^2}) \omega^+ + \vert x \vert \omega^-   \right]}{\sqrt{2}(x^2+y^2)^{3/4} \, \mathrm{sign}(x)}+\mathcal{O}\left (x^{3/2},y^{3/2} \right)\,,
\end{align}
where $\psi y =p$ as above and:
\begin{align}
 \omega^{\pm}=\sqrt{x^2+y^2 \pm y \sqrt{x^2+y^2}}\,, \quad 
x=\frac{s}{\psi}\,.
\end{align}
We explicitly see that no Maxwell limit is obtained in this regime. One may derive an additional one-parameter family of duality-invariant theories of electrodynamics from \eqref{eq:ame} by direct application of the ModMax map \eqref{eq:modmaxmap}:
\begin{equation}
\mathcal{M}_\gamma (\mathcal{L}_{\psi{\rm M}})=\mathcal{L}_{\psi{\rm M}}(\mathcal{L}_\gamma^{\rm MM}(s,p),p,\psi)\,.
\label{eq:gammame}
\end{equation}
Note that $\lim_{\psi \rightarrow 0} \mathcal{M}_\gamma (\mathcal{L}_{\psi{\rm M}})=\mathcal{L}_\gamma^{\rm MM}$. %In the weak-field limit $\psi^2 > >s^2+p^2>0$, we get the following approximation for $\mathcal{M}_\gamma (\mathcal{L}_{\psi{\rm M}})$:
%\begin{align}
%\nonumber
%\mathcal{M}_\gamma (\mathcal{L}_{\psi{\rm M}})&=-\psi\left [ \frac{\arctan \left(\frac{x_\gamma}{y} \right)}{2}+\,\mathrm{sign}(x_\gamma) \arctan \sqrt{\frac{\omega_\gamma^+ }{\omega_\gamma^-}} \right]\\& \hspace{-1.4cm}+\frac{\psi \left [ (y +\sqrt{x_\gamma^2+y^2}) \omega_\gamma^+ + \vert x_\gamma \vert \omega_\gamma^-   \right]}{\sqrt{2}(x_\gamma^2+y^2)^{3/4} \, \mathrm{sign}(x_\gamma)}+\mathcal{O}\left (x_\gamma^{3/2},y^{3/2} \right)\,,
%\end{align}
%where $\psi y =p$ as above and:
%\begin{align}
% \omega_\gamma^{\pm}=\sqrt{x_\gamma^2+y^2 \pm y \sqrt{x_\gamma^2+y^2}}\,, \quad 
%x_\gamma=\frac{\mathcal{L}_\gamma^{\rm MM}}{\psi}\,.
%\end{align}
\end{example}

{\bf Discussion}. In this work we have found various novel duality-invariant theories of electrodynamics. We have achieved this through two different approaches. On the one hand, we have identified a general method to construct one-parameter families of duality-invariant theories starting from any given one. Strongly rooted in the properties of the ModMax Lagrangian, we have called such an algorithm the \emph{ModMax map} and obtained new families of duality-invariant theories by direct application of this map to various well-known examples in the literature. For those theories satisfying \eqref{eq:spdual} with $\psi=0$, we have also showed that the ModMax map also respects the causality properties of the starting theory if the parameter introduced by the ModMax map is nonnegative.

On the other hand, we have also identified novel duality-invariant theories that satisfy \eqref{eq:spdual} for non-trivial $\psi$. One of these new theories represents a generalization of Bialynicki-Birula electrodynamics, while the second family of theories features a well-defined Maxwell limit as $\psi \rightarrow 0$ --- correspondingly, a ModMax limit if one considers the one-parameter generalization constructed from the ModMax map. We were able to find these new solutions to \eqref{eq:spdual} by requiring them to be a homogeneous function of degree one in the variables $s$, $p$ and $\psi$, which imposes the differential constraint \eqref{eq:hdifcond}. Interestingly enough, this condition is tantamount to the conformal invariance of the subsequent Lagrangian if $\psi$ transforms appropriately under conformal transformations. However, if $\psi$ is assumed to be composed by dynamical fields, the corresponding equations for the fields making up $\psi$ would not be covariant under duality rotations, thus hampering the duality invariance of the theory --- unless for very special choices for $\psi$, such as taking it to be a total derivative. In any event, one may regard the additional constraint \eqref{eq:hdifcond} as a computational trick to identify novel duality-invariant theories for non-trivial $\psi$ --- conceiving it is a pure integration constant if necessary.

The present manuscript poses the scrutiny of various research directions in the near future. On the one hand, it would be interesting to examine which physical properties of a given duality-invariant theory (beyond causality) will be preserved by the ModMax map if $\gamma \geq 0$. Among these potential properties, birefringence \cite{Boillat:1970gw,Plebanski:1970zz,Bialynicki-Birula:1992rcm} will not be one of them: while Born-Infeld is non-birefringent, its ModMax generalization does feature birefringence \cite{Russo:2022qvz}. 

On the other hand, the study of duality-invariant theories with non-trivial $\psi$ represents an intriguing avenue. One could explore the Hamiltonian formalism of such theories, analyzing whether a simple map relating the Lagrangian and Hamiltonian pictures of duality-invariant theories may exist, just like in theories of pure electrodynamics \cite{Russo:2024ptw,Russo:2025fuc}. Also, among the novel theories with non-zero $\psi$ that we have identified, the theory \eqref{eq:ame} conforms a particularly interesting one, as it possesses a well-defined Maxwell limit. It is natural to wonder what further unique physical properties may possess this theory. More generally, it is tantalizing to investigate whether there are other duality-invariant theories having Maxwell theory as weak-field limit and which are analytic for all possible values of the invariants $s$ and $p$, including null fields. What physical properties would these theories have? This work only represents the first step towards the exploration of these questions.

{\bf Acknowledgments}. \'AJM would like to thank Jorge Russo, Dmitri Sorokin and Paul Townsend for enlightening conversations. \'AJM was supported by a Juan de la Cierva contract (JDC2023-050770-I) from Spain’s Ministry of Science, Innovation and Universities.

\onecolumngrid  
\begin{center}  
{\Large\bf Appendices} 
\end{center} 
\appendix 
\allowdisplaybreaks

\section{Duality-invariance condition with $\psi \neq 0$ in the Hamiltonian formalism}\label{appA}

In this section, we will briefly describe how a nonzero $\psi$ appears in the condition that requires duality invariance in the Hamiltonian picture --- we refer the reader to \cite{Birula:1975book,Bialynicki-Birula:1992rcm} for useful background on the topic. In this formalism, the fundamental fields correspond to the electric induction $\mathbf{D}$ and the magnetic induction $\mathbf{B}$. If $\mathcal{H}=\mathcal{H}(\mathbf{D},\mathbf{B})$ denotes the Hamiltonian density, the corresponding generalized Maxwell equations of the theory are given by:
\begin{align}
\dot{\mathbf{B}}&=-\mathbf{\nabla} \times \mathbf{E}\,, \quad \mathbf{\nabla} \cdot \mathbf{B}=0\,, \\
\dot{\mathbf{D}}&=\mathbf{\nabla} \times \mathbf{H}\,, \quad \mathbf{\nabla} \cdot \mathbf{D}=0\,, 
\end{align}
where the electric and magnetic intensities $\mathbf{E}$ and $\mathbf{H}$ are defined as:
\begin{equation}
\mathbf{E}=\frac{\partial \mathcal{H}}{\partial \mathbf{D} }\,, \quad \mathbf{H}=\frac{\partial \mathcal{H}}{\partial \mathbf{B} }\,.
\end{equation}
Accordingly, if $\mathcal{L}$ stands for the Lagrangian of the theory, we have:
\begin{equation}
\mathbf{D}=\frac{\partial \mathcal{L}}{\partial \mathbf{E} }\,, \quad \quad \mathbf{H}=-\frac{\partial \mathcal{L}}{\partial \mathbf{B} }\,.
\end{equation}
Duality rotations act by shifting the phase of the complex vector $\mathbf{D}+i \mathbf{B}$. As in the Lagrangian formalism, we only require duality invariance to be a symmetry of the equations of motion, not of the theory itself. Infinitesimal duality rotations act on $\mathbf{D}$ and $\mathbf{B}$ as $\delta \mathbf{D}=-\alpha \mathbf{B}$ and  $\delta \mathbf{B}=\alpha \mathbf{D}$ for infinitesimal parameter $\alpha$. If the Hamiltonian density transforms as $\delta \mathcal{H}=\alpha \psi$, where $\psi$ is independent of $\mathbf{D}$ and $\mathbf{B}$, then the theory will be duality-invariant, as the set of (electromagnetic) equations of motion will remain unchanged. In particular:
\begin{equation}
\delta \mathcal{H}=\frac{\partial \mathcal{H}}{\partial \mathbf{D} } \cdot \delta \mathbf{D}+\frac{\partial \mathcal{H}}{\partial \mathbf{B} } \cdot \delta \mathbf{B}=-\alpha \mathbf{E} \cdot \mathbf{B}+\alpha \mathbf{D}\cdot \mathbf{H}=\alpha \psi\,.
\end{equation}
As a result, we end up with:
\begin{equation}
 \mathbf{D} \cdot \mathbf{H}=\mathbf{E}\cdot \mathbf{B}+\psi\,.
\end{equation}
Observe that this exact equation was found in \cite{Gibbons:1995cv} --- see their equation (2.7). Now, if the starting Lagrangian $\mathcal{L}$ is gauge and Lorentz invariant, so that $\mathcal{L}=\mathcal{L}(s,p)$:
\begin{equation}
\mathbf{D}=\frac{\partial \mathcal{L}}{\partial \mathbf{E} }=\frac{\partial \mathcal{L}}{\partial s}\mathbf{E}+\frac{\partial \mathcal{L}}{\partial p}\mathbf{B}\,, \quad \mathbf{H}=-\frac{\partial \mathcal{L}}{\partial \mathbf{B} }=\frac{\partial \mathcal{L}}{\partial s}\mathbf{B}-\frac{\partial \mathcal{L}}{\partial p}\mathbf{E}\,,
\end{equation}
by using that $s=\frac{1}{2}\left ( \vert \mathbf{E} \vert^2- \vert \mathbf{B} \vert^2 \right)$ and $p= \mathbf{E} \cdot \mathbf{B}$, we end up precisely with \eqref{eq:spdual}. As a consequence, we conclude that the appearance of a nonzero $\psi$ is due to the fact that the Hamiltonian \emph{per se} is not invariant under duality rotations. Nevertheless, this is not a worrisome feature, as it is only the set of equations of motion the one that needs to remain invariant under duality rotations, exactly as in the Lagrangian picture. This is the reason why $\psi$ does not appear in relevant works such as \cite{Bandos:2020jsw} --- see their equation (8) ---, where they explicitly look for Hamiltonians which remain themselves invariant under duality rotations.

Interestingly enough, this implies that the Hamiltonian density will not be solely a function of the rotational invariants $ \vert \mathbf{D} \vert^2+ \vert \mathbf{B} \vert^2$ and $\vert \mathbf{D} \times \mathbf{B} \vert$. Indeed, it will also include a functional dependent on $\mathbf{D} \cdot \mathbf{B}$, which is not invariant under duality rotations --- note that the Hamiltonians will be Lorentz-invariant by construction, as they are related to Lorentz-invariant Lagrangians.

\section{Instances of dynamical choices for $\psi$}\label{appB}

Let us consider a certain theory satisfying the duality-invariance condition \eqref{eq:spdual}. As explained in the body of the manuscript, allowing $\psi$ to be constructed from other dynamical fields entails an important danger: the equations of motion for such additional fields might not be invariant under duality transformations. One may solve this problem if $\psi$ corresponds to a total derivative term, as it is illustrated below with three different examples.

%one may solve this problem by either ensuring that $\dfrac{\partial \mathcal{L}}{\partial \psi}$ is never affected by derivatives in the equations of motion for the fields composing $\psi$ or, contrarily, if this term always appears under a differentiation operation --- which may be achieved if $\psi$ corresponds to a total derivative term. The later case is particularly interesting, as one may separately add a kinetic term for the relevant fields and get non-trivial dynamics. 

 %As duality invariance naturally transforms the equation for the gauge field $A_a$ into its Bianchi identity and vice versa,:
\begin{enumerate}
\item $\psi=\lambda \Box \Phi$, where $\Phi$ is a scalar field, $\Box$ stands for the Laplacian operator and $\lambda$ a certain coupling with units of length$^{-1}$. Adding the scalar kinetic term, one could consider the scalar-vector theory:
\begin{equation}
\tilde{\mathcal{L}}_{F_{ab},\Phi}=\frac{\beta}{2}\Phi \Box \Phi+\mathcal{L}(s,p,\lambda \Box \Phi)\,,
\label{eq:escalarvector}
\end{equation}
where $\beta$ is a dimensionless coupling and  $\mathcal{L}$ satisfies \eqref{eq:spdual} with $\psi=\lambda \Box \Phi$. The equation of motion for the scalar reads:
\begin{equation}
 \beta \Box  \Phi+ \lambda \Box \left(  \frac{\partial \mathcal{L}}{\partial \psi}\right)=0\,.
 \label{eq:escalarveom}
\end{equation}
Since $\frac{\partial \mathcal{L}}{\partial \psi}$ is necessarily affected by derivatives in \eqref{eq:escalarveom}, \eqref{eq:trafoderpsi} ensures that \eqref{eq:escalarveom} remains indeed invariant under electromagnetic duality rotations. Nonetheless, the equations will feature higher derivatives.

\item $\psi=\lambda \mathfrak{p}$, where $\lambda$ is a dimensionless coupling and $\mathfrak{p}=-\frac{1}{4}\mathcal{F}_{ab} \left (\star \mathcal{F} \right) ^{ab}$ for a field strength $\mathcal{F}_{ab}=2\partial_{[a} \mathcal{A}_{b]}$ of a different gauge field $\mathcal{A}_a$. Consider now theories of the form:
\begin{equation}
\tilde{\mathcal{L}}_{F_{ab} \mathcal{F}_{ab}}=\hat{\mathcal{L}}(\mathfrak{s},\mathfrak{p})+\mathcal{L}(s,p,\lambda \mathfrak{p})\,,
\label{eq:exdoblecampo}
\end{equation} 
where $\mathfrak{s}=-\frac{1}{4}\mathcal{F}_{ab}\mathcal{F}^{ab}$, $\hat{\mathcal{L}}$ is an arbitrary function of the indicated variables and where $\mathcal{L}(s,p,\psi)$ satisfies \eqref{eq:spdual}. The equation of motion for $\mathcal{F}_{ab}$ is given by:
\begin{equation}
\partial_a \left (  \frac{\partial \hat{\mathcal{L}}}{\partial \mathfrak{s}}  \mathcal{F}^{ab}\right) +   \partial_a \left ( \frac{\partial \hat{\mathcal{L}}}{\partial \mathfrak{p}} +\lambda\, \frac{\partial \mathcal{L}}{\partial \psi}  \right)  \star \mathcal{F}^{ab}=0\,.
\label{eq:pextra}
\end{equation}
Again, we directly see that \eqref{eq:pextra} is exactly invariant under electromagnetic duality rotations. One could additionally wonder whether there exist theories $\hat{\mathcal{L}}(\mathfrak{s},\mathfrak{p})$ for which the complete theory $\tilde{\mathcal{L}}_{F_{ab} \mathcal{F}_{ab}}$ is \emph{also} invariant under  duality rotations associated with the extra gauge field $\mathcal{A}_a$. This lies beyond the scope of the present manuscript and is planned to be studied elsewhere. 
\item $\psi=\lambda \chi_{\rm GB}$, where $\lambda$ is a dimensionless coupling and $\chi_{\rm GB}=R_{abcd} R^{abcd}-4 R_{ab}R^{ab}+R^2$ stands for the Gauss-Bonnet density. Take the following theory:
\begin{equation}
\tilde{\mathcal{L}}_{F_{ab}, g_{ab}}= \frac{R}{16 \pi G_N}+\mathcal{L}(s,p,\lambda \chi_{\rm GB})\,,
\end{equation}
where $\mathcal{L}(s,p,\psi)$ fulfills \eqref{eq:spdual}. Define the following tensor \cite{Padmanabhan:2011ex} $P_{ab}{}^{cd}$:
\begin{equation}
P_{ab}{}^{cd}=\frac{\partial \tilde{\mathcal{L}}_{F_{ef}, g_{ef}}}{\partial R_{cd}{}^{ab}}=\frac{1}{16\pi G_N} \delta_{[a}^{c} \delta_{b]}^{d}+2\lambda \frac{\partial \mathcal{L}}{\partial \psi} \Sigma_{ab}{}^{cd} \,, \quad \Sigma_{ab}{}^{cd}=R_{ab}{}^{cd}-4 R_{[a}^{[c} \delta_{b]}^{d]}+R \delta_{[a}^{c} \delta_{b]}^{d}\,.
\end{equation}
The gravitational equations of motion for the theory $\tilde{\mathcal{L}}_{F_{ab}, g_{ab}}$ will be given by \cite{Cano:2020qhy}:
\begin{equation}
P_{(a}{}^{cde} R_{b)cde}-\frac{1}{2}g_{ab}\tilde{\mathcal{L}}_{F_{ef}, g_{ef}}+2 \nabla_c \nabla_d P_{(a}{}^c{}_{b)}{}^d+\frac{\partial \mathcal{L}}{\partial F^{(a}{}_c } F_{b)c}=0\,.
\label{eq:geneomgrav}
\end{equation}
We compute:
\begin{align}
P_{(a}{}^{cde} R_{b)cde}=\frac{R_{ab}}{16 \pi G_N} +\frac{\lambda}{2} \frac{\partial \mathcal{L}}{\partial \psi} \chi_{\rm GB} g_{ab}\,, \quad
2 \nabla_c \nabla_d P_{(a}{}^c{}_{b)}{}^d=4\lambda \nabla_c \nabla_d \left ( \frac{\partial \mathcal{L}}{\partial \psi}\right) \Sigma_{a}{}^c{}_b{}^d\,, \quad   \,.
\end{align}
As a result, \eqref{eq:geneomgrav} simplifies down to:
\begin{equation}
\frac{G_{ab}}{16 \pi G_N}+\frac{1}{2} g_{ab} \left[ \lambda \frac{\partial \mathcal{L}}{\partial \psi} \chi_{\rm GB}- \mathcal{L} \right]+4\lambda \nabla_c \nabla_d \left ( \frac{\partial \mathcal{L}}{\partial \psi}\right)  \Sigma_{a}{}^c{}_b{}^d+\frac{\partial \mathcal{L}}{\partial F^{(a}{}_c } F_{b)c}=0\,.
\label{eq:gravdual}
\end{equation}
Now, let us explore how this equation transforms under an infinitesimal duality rotation of angle $\alpha$. Direct computation shows:
\begin{equation}
\delta_\alpha \left (  \lambda \frac{\partial \mathcal{L}}{\partial \psi} \chi_{\rm GB}- \mathcal{L} \right)=\alpha \left ( -\lambda  \chi_{\rm GB}+2  p \right)\,, \quad \delta_\alpha \left ( \frac{\partial \mathcal{L}}{\partial F^{(a}{}_c } F_{b)c} \right)=\alpha g_{ab} \left (-p+\frac{\lambda \chi_{\rm GB}}{2} \right)\,.
\end{equation}
As a consequence, we clearly observe that \eqref{eq:gravdual} is invariant under duality rotations. In any event, we also note that the \eqref{eq:gravdual} will include higher derivatives of the metric and the gauge field.

\end{enumerate}

\section{Solving \eqref{eq:spdual} and \eqref{eq:hdifcond} perturbatively}\label{appC}
Let $\mathcal{L}(s,p,\psi)$ be a NED satisfying \eqref{eq:spdual} with nonzero $\psi$. Assume it is a homogeneous function of degree one in the variables $s,p$ and $\psi$. As a result:
\begin{equation}
\label{eq:hominv}
s \frac{\partial \mathcal{L}}{\partial s}+p \frac{\partial \mathcal{L}}{\partial p}+\frac{\partial \mathcal{L}}{\partial \psi}\psi=\mathcal{L}\,,
\end{equation}
Write the Lagrangian as a power series in $\psi$:
\begin{equation}
\mathcal{L}(s,p,\psi)=\sum_{n=0}^\infty \mathcal{L}_n(s,p)  \psi^{n}\,,
\label{eq:serpotlag}
\end{equation}
By substituting into the condition \eqref{eq:hominv}, we find:
\begin{equation}
s \frac{\partial \mathcal{L}_n}{\partial s}+p \frac{\partial \mathcal{L}_n}{\partial p}=(1-n) \mathcal{L}_n\,, \quad n \geq 0\,.
\label{eq:serpotconf}
\end{equation}
On the other hand, if we impose \eqref{eq:serpotlag} in the duality-invariance condition \eqref{eq:spdual} one gets:
\begin{equation}
\sum_{m=0}^n \left ( p \frac{\partial \mathcal{L}_m}{\partial s} \frac{\partial \mathcal{L}_{n-m}}{\partial s} -2s \frac{\partial \mathcal{L}_m}{\partial s} \frac{\partial \mathcal{L}_{n-m}}{\partial p}-p\frac{\partial \mathcal{L}_m}{\partial p} \frac{\partial \mathcal{L}_{n-m}}{\partial p} \right) =p \,\delta_{n,0}-\delta_{n,1}\,, \quad n \geq 0\,.
\label{eq:serpotdual}
\end{equation}
From these derivations, one can follow the algorithm below to construct novel families of duality-invariant theories of electrodynamics which are homogeneous functions of degree one:
\begin{enumerate}[leftmargin=*]
\item Start from a given seed duality-invariant theory of electrodynamics $\mathcal{L}_0$ which is a homogeneous function of degree one.
\item Take first $k=1$. Solve for $\frac{\partial \mathcal{L}_k}{\partial p}$ (or analogously, for $\frac{\partial \mathcal{L}_k}{\partial s}$) in Eq.\ \eqref{eq:serpotconf}. 
\item Substitute this result in \eqref{eq:serpotdual}. One gets a first-order differential equation for $\mathcal{L}_k$ in which only derivatives with respect to $s$ (or $p$, had we solved in step 2 for $\frac{\partial \mathcal{L}_k}{\partial s}$) appear. 
\item Solve such a differential equation and get $\mathcal{L}_k$. One gets an arbitrary function of $p$ (respectively, of $s$) to be fixed by demanding consistency with \eqref{eq:serpotconf}. On doing this, an arbitrary integration constant will remain unspecified. %to be fixed to be after verify that the expression for $\frac{\partial \mathcal{L}_k}{\partial p}$ (or for $\frac{\partial \mathcal{L}}{\partial s}$) obtained in step 2 is satisfied.
\item Repeat steps 2, 3 and 4 for $k \geq 2$. 
\end{enumerate}
The most general duality-invariant theory at a given perturbative order in $\psi$ satisfying the homogeneity condition \eqref{eq:hominv} will be obtained by taking ModMax theory as a seed, $\mathcal{L}_0=\mathcal{L}_{\rm MM}^\gamma$. Up to sixth order in $\psi$, we find:
\begin{align}
\mathcal{L}_1&=-\frac{1}{2} \arctan \left ( \frac{\mathcal{L}_{\rm MM}^\gamma }{p} \right)+\kappa_1\,, \\
\mathcal{L}_2&=-\frac{\mathcal{L}_{\rm MM}^\gamma }{8(\left ( \mathcal{L}_{\rm MM}^\gamma \right)^2+p^2)}+\frac{\kappa_2}{\sqrt{\left ( \mathcal{L}_{\rm MM}^\gamma \right)^2+p^2}}\,, \\
\mathcal{L}_3&=\frac{-\mathcal{L}_{\rm MM}^\gamma p+8\kappa_2 p \sqrt{\left ( \mathcal{L}_{\rm MM}^\gamma \right)^2+p^2}+16(p^2+\left ( \mathcal{L}_{\rm MM}^\gamma \right)^2) \kappa_3}{16(\left ( \mathcal{L}_{\rm MM}^\gamma \right)^2+p^2)^2}\,, \\ \nonumber
\mathcal{L}_4&=\frac{(5+192 \kappa_2^2)\left ( \mathcal{L}_{\rm MM}^\gamma \right)^3+384 \kappa_3 p(\left ( \mathcal{L}_{\rm MM}^\gamma \right)^2+p^2)+3 p^2 \left (-5\mathcal{L}_{\rm MM}^\gamma +64 \kappa_2^2 \mathcal{L}_{\rm MM}^\gamma +64 \kappa_2 \sqrt{\left ( \mathcal{L}_{\rm MM}^\gamma \right)^2+p^2} \right)}{384(\left ( \mathcal{L}_{\rm MM}^\gamma \right)^2+p^2)^3}\\& +\frac{\kappa_4}{(\left ( \mathcal{L}_{\rm MM}^\gamma \right)^2+p^2)^{3/2}}\,, \\
\nonumber
\mathcal{L}_5&=\frac{(320\kappa_2^2-7)\mathcal{L}_{\rm MM}^\gamma p^3+(320\kappa_2^2+7)p \left ( \mathcal{L}_{\rm MM}^\gamma \right)^3+320 \kappa_3 p^2(\left ( \mathcal{L}_{\rm MM}^\gamma \right)^2+p^2)}{256(\left ( \mathcal{L}_{\rm MM}^\gamma \right)^2+p^2)^4}\\&+\frac{12 \kappa_4 p (\left ( \mathcal{L}_{\rm MM}^\gamma \right)^2+p^2)+\kappa_2(4p^3+16 \kappa_3 \mathcal{L}_{\rm MM}^\gamma  p^2-p \left ( \mathcal{L}_{\rm MM}^\gamma \right)^2+16 \kappa_3 \left ( \mathcal{L}_{\rm MM}^\gamma \right)^3)}{8(\left ( \mathcal{L}_{\rm MM}^\gamma \right)^2+p^2)^{7/2}}+\frac{\kappa_5}{(\left ( \mathcal{L}_{\rm MM}^\gamma \right)^2+p^2)^{2}}\,,\\
\nonumber
\mathcal{L}_6&=\frac{p \left(-8 \kappa_2^3 p (\left ( \mathcal{L}_{\rm MM}^\gamma \right)^2+p^2)+\kappa_2 \left(48 \kappa_3 p^2 \mathcal{L}_{\rm MM}^\gamma +48 \kappa_3 \left ( \mathcal{L}_{\rm MM}^\gamma \right)^3+4 p^3-3 p \left ( \mathcal{L}_{\rm MM}^\gamma \right)^2\right)+18 \kappa_4 p \left(p^2+\left ( \mathcal{L}_{\rm MM}^\gamma \right)^2\right)\right)}{8 \left(p^2+\left ( \mathcal{L}_{\rm MM}^\gamma \right)^2\right)^{9/2}}\\\nonumber &+\frac{5 p^4 \mathcal{L}_{\rm MM}^\gamma  \left(2752 \kappa_2^2+3072 \kappa_2 \kappa_4-21\right)+10 p^2 \left ( \mathcal{L}_{\rm MM}^\gamma \right)^3 \left(1408 \kappa_2^2+3072 \kappa_2 \kappa_4+21\right)}{5120\left(p^2+\left ( \mathcal{L}_{\rm MM}^\gamma \right)^2\right)^5}\\\nonumber &+\frac{\left ( \mathcal{L}_{\rm MM}^\gamma \right)^5 \left(320 \kappa_2^2+15360 \kappa_2 \kappa_4-21\right)+10240 \kappa_3^2 \mathcal{L}_{\rm MM}^\gamma  \left(p^2+\left ( \mathcal{L}_{\rm MM}^\gamma \right)^2\right)^2+1280 \kappa_3 \left(6 p^5+5 p^3 \left ( \mathcal{L}_{\rm MM}^\gamma \right)^2-p \left ( \mathcal{L}_{\rm MM}^\gamma \right)^4\right)}{5120\left(p^2+\left ( \mathcal{L}_{\rm MM}^\gamma \right)^2\right)^5}  \\& +\frac{2 \kappa_5 p}{\left(p^2+\left ( \mathcal{L}_{\rm MM}^\gamma \right)^2\right)^3}+\frac{\kappa_6}{(\left ( \mathcal{L}_{\rm MM}^\gamma \right)^2+p^2)^{5/2}}\,,
\end{align}
where $\{\kappa_1,\dots, \kappa_6\}$ are arbitrary real constants. We observe that each order $n$ introduces a new integration constant. Continuing with this procedure for arbitrary $n$, one would end up with a tower of arbitrary constants $\{\kappa_n\}_{n=1}^\infty$, which signals the fact that there is a one-variable worth of Lagrangians satisfying simultaneously Eq.\ \eqref{eq:spdual} and Eq.\ \eqref{eq:hominv}. In another vein, let us highlight that the theory $\mathcal{L}_{\rm \psi M}$ presented in \eqref{eq:ame} is obtained by taking to $n \rightarrow \infty$ the perturbative algorithm we have just described and setting all integration constants $\kappa_n=0$.

%--- this was to be expected since we have two first-order PDEs for three independent variables $(s,p,\psi)$

\twocolumngrid

\bibliographystyle{JHEP-2}
\bibliography{Gravities}
%\vspace{1cm}
\noindent 

%\appendix
%\onecolumngrid \vspace{1.5cm}

\end{document}